\documentclass[11pt,a4paper]{article}
\usepackage[margin=1in]{geometry}
\usepackage{amsmath,amssymb,amsthm}
\usepackage[hidelinks]{hyperref}
\usepackage{xcolor}

\newtheorem{theorem}{Theorem}[section]
\newtheorem{proposition}[theorem]{Proposition}
\newtheorem{lemma}[theorem]{Lemma}

\theoremstyle{remark}
\newtheorem{remark}[theorem]{Remark}
\theoremstyle{definition}

\newcommand{\Def}{\mathrm{Def}}
\newcommand{\Ric}{\mathrm{Ric}}
\newcommand{\divg}{\mathrm{div}}

\newcommand{\R}{\mathbb{R}}
\newcommand{\HH}{\mathbb{H}}
\newcommand{\PP}{\mathbb{P}}

\title{Global exponential stability for the three-dimensional\\
Navier-Stokes equations on hyperbolic space}
\author{{Zhi-Wei Wang}${}^{1,*}$ and Samuel L.\ Braunstein${}^{2,\dagger}$\\[6pt]
	\small ${}^{1}$College of Physics, Jilin University, Changchun 130012, China\\
	\small ${}^{2}$Computer Science, University of York, York YO10 5GH, UK\\[6pt]
	\small $^*$E-mail: {zhiweiwang.phy@gmail.com}
	\small $^\dagger$E-mail: {sam.braunstein@york.ac.uk}}
\date{\today}

\begin{document}
\maketitle

\begin{abstract}
We prove that the three-dimensional incompressible Navier-Stokes
equations with the deformation Laplacian on hyperbolic 3-space
$\HH^3$ admit a unique global mild solution for sufficiently small
initial data in $L^3(\HH^3)$, and that this solution decays
exponentially to zero. The exponential decay rate is
$\mu\lambda_\Def^{(3)}$, where $\mu$ is the kinematic viscosity and
$\lambda_\Def^{(3)} = 26/9$ is the effective spectral gap of the
deformation Laplacian in $L^3$. On flat $\R^3$, the corresponding
Kato-type result gives only algebraic decay. The exponential stability
is a macroscopic consequence of the spectral gap provided by negative
curvature. We also show that the $L^2$ norm is supercritical on
$\HH^3$ (as on $\R^3$), with the obstruction arising from the local
ultraviolet scaling of the heat kernel, which is insensitive to global
geometry. The boundary between what curvature can and cannot improve is
located exactly: the Fujita-Kato integral has a scaling exponent
$1/2 - 3/(2p)$ that depends only on the integrability of the initial
data, not on the geometry of the manifold. For $p \geq 3$ (the Kato
critical space), the integral is bounded and the spectral gap
contributes exponential time decay. For $p < 3$, the integral
diverges at $t = 0$ (and strictly diverges for all $t>0$ when $p \le 3/2$)
regardless of the curvature.
\end{abstract}

\section{Introduction}
\label{sec:intro}

The incompressible Navier-Stokes equations on a Riemannian manifold
$(M,g)$,
\begin{equation}\label{eq:NS}
\partial_t u + \nabla_u u + \nabla p = \mu\,\Delta_L u,\qquad
\divg\,u = 0,\qquad u(0) = u_0,
\end{equation}
require a choice of Laplacian $\Delta_L$ acting on vector fields. On
flat $\R^3$, all natural candidates (Hodge, Bochner, deformation)
agree. On a curved manifold, they differ by multiples of the Ricci
curvature, and the choice affects both the analytical properties and
the physical content of the equations (see, e.g., the recent survey by Czubak~\cite{Czubak24}).

In a companion paper~\cite{WB2026}, we established that the
\emph{deformation Laplacian} $\Delta_\Def = \Delta_B + \Ric$ (where
$\Delta_B$ is the Bochner Laplacian) is the unique choice consistent
with the Lagrangian kinematics of the fluid, and proved global
well-posedness in two dimensions on manifolds with strictly negative
curvature. This intrinsic mandate is further reinforced from an extrinsic perspective, as the deformation Laplacian universally emerges in the thin-shell limit of ambient flows under stress-free boundary conditions \cite{WB2026Thin}. Supported by these dual physical foundations, we are justified in focusing exclusively on the deformation operator to leverage its unique spectral advantages. The present paper addresses the three-dimensional case on
hyperbolic 3-space $\HH^3$ (constant sectional curvature $-1$).

In three dimensions, the global regularity of the Navier-Stokes
equations is open even on flat $\R^3$. The classical result of
Kato~\cite{Kato84} gives global existence for small data in
$L^3(\R^3)$ (the critical space determined by scaling), with
algebraic decay $\|u(t)\|_{L^3} \to 0$ as $t\to\infty$. The $L^2$
norm is supercritical: the energy estimate alone does not give
uniqueness or regularity.

On the hyperbolic plane $\HH^2$, Chan and
Czubak~\cite{CC10,CC13b} showed that Leray-Hopf weak
solutions are non-unique even with the deformation
Laplacian derived from Ebin and Marsden.  Khesin and
Misio{\l}ek~\cite{KM12} showed that this non-uniqueness is
a consequence of the Hodge decomposition specific to
dimension two and does not occur on $\HH^n$ for
$n \geq 3$.  In three dimensions, Lichtenfelz~\cite{Lichtenfelz15}
proved non-uniqueness on a negatively curved 3-manifold,
but only for the Hodge Laplacian: harmonic 1-forms are
stationary solutions of the Hodge-based equation but not
of the deformation equation (since $Lh = 2\Ric(h) \neq 0$).
These results show that in three dimensions the deformation
Laplacian is better behaved than the Hodge Laplacian with
respect to uniqueness.  The present paper works exclusively
with the deformation Laplacian, whose spectral gap
(established below) provides analytical advantages absent
for the Hodge Laplacian.

On $\HH^3$, the scaling symmetry of the Navier-Stokes equations is
broken by the curvature. The deformation Laplacian has a spectral gap:
the Stokes operator $A = -\PP\Delta_\Def$ (where $\PP$ is the Leray
projector onto divergence-free fields) satisfies $\sigma(A) \subseteq
[4,\infty)$ on $L^2$, and the effective gap in $L^3$ is
$\lambda_\Def^{(3)} = 26/9$. This spectral gap provides exponential
time decay in the heat semigroup, which has no flat-space analogue.

The Fujita-Kato approach for the Navier-Stokes equations on 
non-compact manifolds with negative curvature was pioneered by 
Pierfelice~\cite{Pierfelice17}, who established crucial dispersive and smoothing 
estimates. Building on this framework, Balentine~\cite{Balentine20} studied the 
equations on $\HH^n$ using the deformation (Ebin-Marsden) Laplacian, proving 
global existence for small $L^n$ initial data and demonstrating qualitative 
exponential time decay. While our geometric setting overlaps with these works, 
the present paper advances the theory on $\HH^3$ by introducing several explicit 
novelties. First, whereas Balentine demonstrates generic exponential decay without 
explicitly evaluating the rates, we leverage the exact $L^2$ spectral gap of the 
deformation Laplacian on $\HH^3$ to derive the \emph{explicit optimal} exponential 
decay rate $\mu\lambda_{\Def}^{(3)} = \mu \cdot 26/9$. Furthermore, our explicit 
computation captures the \emph{full} effective spectral gap rather than a half-gap. 
This avoids the suboptimal decay rates that occur when mathematically trading 
spectral bounds to absorb polynomial time singularities. In addition, while prior 
works primarily treat the deformation Laplacian as a mathematically convenient 
analytical choice, we establish a physical connection by grounding the operator 
in physical principles, utilizing the kinematic selection derivation established 
in our companion paper~\cite{WB2026}. Finally, we explicitly formulate and prove a 
short-time UV obstruction (Theorem~\ref{thm:UV}), which neither Pierfelice nor 
Balentine addressed. This theorem rigorously locates the boundary between what 
global curvature can and cannot improve, proving that the local Euclidean structure 
dominates the short-time scaling, keeping $L^2$ initial data supercritical on $\HH^3$.

The main result of this paper is:

\begin{theorem}[Main theorem]\label{thm:main}
Let $\HH^3$ be hyperbolic 3-space with sectional curvature $-1$, and
let $A = -\PP\Delta_\Def$ be the Stokes operator with the deformation
Laplacian. There exists $\epsilon_0 > 0$ such that for any
divergence-free $u_0 \in L^3(\HH^3)$ with $\|u_0\|_{L^3} <
\epsilon_0$, the integral equation
\begin{equation}\label{eq:mild}
u(t) = e^{-t\mu A}u_0 - \int_0^t e^{-(t-s)\mu A}\PP\nabla\cdot
(u(s)\otimes u(s))\,ds
\end{equation}
has a unique global mild solution $u\in C([0,\infty); L^3_\sigma)$
satisfying
\begin{equation}\label{eq:decay}
\|u(t)\|_{L^6} \leq C\,\|u_0\|_{L^3}\,t^{-1/4}\,
{e^{-\mu\lambda_\Def^{(3)}t}},\qquad t > 0,
\end{equation}
where $\lambda_\Def^{(3)} = 26/9$ and $L^3_\sigma$ denotes the
closure of smooth compactly supported divergence-free vector fields
in $L^3$.
\end{theorem}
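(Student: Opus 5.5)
We follow the Kato/Fujita--Kato scheme in a time-weighted space, with the exponential factor $e^{\mu\lambda t}$, $\lambda=\lambda_\Def^{(3)}$, built into the norm. The key linear inputs are smoothing estimates for the Stokes semigroup on $\HH^3$ of the form
\begin{equation*}
\|e^{-t\mu A}f\|_{L^q}\le C\,t^{-\frac32(\frac1p-\frac1q)}e^{-\mu\lambda t}\|f\|_{L^p},\qquad
\|e^{-t\mu A}\PP\nabla\cdot F\|_{L^q}\le C\,t^{-\frac12-\frac32(\frac1p-\frac1q)}e^{-\mu\lambda t}\|F\|_{L^p},
\end{equation*}
valid for $t>0$ in the relevant range ($3\le p\le q\le 6$, and $p=3$ for the second estimate applied to $F=u\otimes u\in L^3$).

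These come from two ingredients. The short-time factor is the local Euclidean heat-kernel smoothing (Pierfelice-type estimates). The exponential factor comes from the $L^2$ gap $\sigma(A)\subseteq[4,\infty)$, transferred to $L^p$ by Riesz--Thorin interpolation with the (non-decaying or weakly decaying) $L^1$/$L^\infty$ bound. This interpolation is where the effective $L^3$ gap $26/9$ is produced, and I take it as given from the spectral analysis earlier in the paper. To obtain the full rate rather than a half-gap, I split $e^{-tA}=e^{-\frac{t}{2}A}e^{-\frac t2 A}$ only for $t\le1$. For $t\ge1$ I use the exact gap bound on $L^3\to L^3$ combined with the unit-time smoothing $L^3\to L^6$, so the weight $t^{-1/4}$ never costs exponent.

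Define $X=\{u:\ \sup_{t>0} t^{1/4}e^{\mu\lambda t}\|u(t)\|_{L^6}<\infty\}$ and $\Phi(u)=e^{-t\mu A}u_0-B(u,u)$. The linear bound gives $\|e^{-t\mu A}u_0\|_X\le C\|u_0\|_{L^3}$. For the bilinear term, use H\"older to get $\|u\otimes v(s)\|_{L^3}\le \|u\|_{L^6}\|v\|_{L^6}\le s^{-1/2}e^{-2\mu\lambda s}\|u\|_X\|v\|_X$. Then, with $q=6$, $p=3$,
\begin{equation*}
t^{1/4}e^{\mu\lambda t}\|B(u,v)(t)\|_{L^6}\le C\|u\|_X\|v\|_X\, t^{1/4}\int_0^t (t-s)^{-3/4}s^{-1/2}e^{-\mu\lambda s}\,ds .
\end{equation*}
For $t\le1$ the integral is the Beta function times $t^{-1/4}$, which is bounded. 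For $t\ge1$, the factor $e^{-\mu\lambda s}$ makes the integral $O(t^{-3/4})$, so the total is $O(t^{-1/2})$, again bounded.

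The exponent bookkeeping is the main point to check. Since the nonlinearity is quadratic, one copy of $e^{-\mu\lambda s}$ suffices to absorb everything, and the kernel keeps the full factor $e^{-\mu\lambda(t-s)}$. Hence no rate is lost, which is exactly why the full gap survives. Banach's fixed point theorem on a small ball of $X$ (radius $2C\|u_0\|_{L^3}$, with $\epsilon_0$ chosen so that $4C^2\epsilon_0<1$) then gives existence, uniqueness in $X$, and the decay bound \eqref{eq:decay}.

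It remains to show $u\in C([0,\infty);L^3_\sigma)$. Estimating $B(u,u)$ in $L^3$ with the same kernel bounds gives $\int_0^t (t-s)^{-1/2}s^{-1/2}\,ds<\infty$. Continuity at $t=0$ uses the standard density argument: for smooth $u_0$ the $X$-norm restricted to short time intervals tends to zero, which controls the nonlinear term there, and strong continuity of the semigroup on $L^3_\sigma$ handles the linear term. Uniqueness in the larger class $C([0,\infty);L^3_\sigma)$ follows by the usual Kato/Meyer argument of Fabes--Jones--Rivi\`ere type, adapted with the same kernel bounds.

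I expect the main obstacle to be the linear estimate with the \emph{full} $L^3$ rate $26/9$ attached simultaneously to the $t^{-1/4}$ and $t^{-3/4}$ smoothing factors. Commuting $\PP$ with $\nabla\cdot$ and obtaining $L^p$ boundedness of $\PP$ on $\HH^3$ also require the spectral gap, since that is what makes Riesz transforms bounded there. The plan is to handle $t\le1$ by local parametrix estimates and $t\ge1$ purely by the gap, then glue the two regimes.
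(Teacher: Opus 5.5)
Your architecture matches the paper's: a Kato iteration in $\sup_{t>0}t^{1/4}e^{\mu\lambda t}\|u(t)\|_{L^6}$, H\"older into $L^3$, and a Beta integral for $t\le1$. The large-time bookkeeping, however, has a real gap.

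\textbf{The weight fails at large times.} Your weight is $t^{1/4}$ for \emph{all} $t>0$, not $\min(t,1)^{1/4}$. So both kernels must carry polynomial decay on top of the full factor $e^{-\mu\lambda t}$ as $t\to\infty$. The mechanism you propose for $t\ge1$ (the $L^3\to L^3$ bound $e^{-\mu\lambda t}$ composed with unit-time $L^3\to L^6$ smoothing) gives only $\|e^{-t\mu A}\|_{L^3\to L^6}\le Ce^{-\mu\lambda t}$. Then $t^{1/4}e^{\mu\lambda t}\|e^{-t\mu A}u_0\|_{L^6}\lesssim t^{1/4}\|u_0\|_{L^3}$ is unbounded. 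The same $O(1)$ large-time kernel makes the Duhamel term grow like $t^{1/4}\int_0^{t}s^{-1/2}e^{-\mu\lambda s}\,ds\sim t^{1/4}$. The weight therefore costs you exactly the decay your gluing does not supply.

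\textbf{The bilinear bound is too strong.} Your bound $Ct^{-3/4}e^{-\mu\lambda t}$ for all $t$ is stronger than anything in the paper, and it does not follow from smoothing at the exact rate. For large $\tau$, $\sup_{x\ge\lambda_0}x^{1/2}e^{-\mu\tau x}$ is attained at the spectral bottom $x=\lambda_0$, so the derivative contributes $O(1)$, not $\tau^{-1/2}$. This is why Proposition~\ref{prop:bilinear} has the kernel $(\tau^{-3/4}+\tau^{-1/4})e^{-\mu\gamma\tau}$. Its large-time $\tau^{-1/4}$ is inherited from the $L^3\to L^6$ bound of Proposition~\ref{prop:Lp-Lq}, which the paper asserts for all $t>0$. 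The resulting term $t^{1/4}\int_0^t(t-s)^{-1/4}s^{-1/2}e^{-\mu\lambda s}\,ds$ (the paper's $I_2$) is missing from your computation. It stays bounded only because $\beta=\delta_0=1/4$, and it does not decay like your $O(t^{-1/2})$.

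\textbf{Where $26/9$ comes from, and a repair.} Interpolation is not the source of $26/9$. The paper gets it from diamagnetic domination, $\lambda_0^{(3)}+2=8/9+2$. Riesz--Thorin between the $L^2$ rate $4$ and a non-decaying $L^\infty$ bound gives only $8/3<26/9$. Done properly, though, your idea is the cleanest repair:
\begin{itemize}
\item On all $1$-forms, $-\Delta_B+2=\Delta_H+4\ge4$, so $\|e^{t(\Delta_B-2)}\|_{L^2\to L^2}\le e^{-4t}$.
\item Domination by the Markovian scalar semigroup gives $\|e^{t(\Delta_B-2)}\|_{L^\infty\to L^\infty}\le e^{-2t}$.
\item Riesz--Thorin then yields the $L^3\to L^3$ rate $\tfrac23\cdot4+\tfrac13\cdot2=\tfrac{10}{3}>\tfrac{26}{9}$.
\end{itemize}
With this spare rate, your $t\ge1$ factorisation gives $Ce^{-10\mu t/3}\lesssim t^{-3/4}e^{-26\mu t/9}$. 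That validates both of your claimed estimates, and the rest of your computation closes.

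\textbf{Caution on the paper's route.} Relying on the all-time polynomial factor, as the paper does, is delicate. Testing $e^{t\Delta}$ with $f=p_t^{1/2}$ (truncated to $t/3\le r\le t$) against the indicator of a unit ball gives $\|e^{t\Delta}\|_{L^3\to L^6}\gtrsim t^{-1/6}e^{-8t/9}$. So \eqref{eq:scalar-Lp-Lq} is too optimistic at large $t$ for $(p,q)=(3,6)$. Switching to the weight $\min(t,1)^{1/4}$ would close the contraction. But it only gives $\|u(t)\|_{L^6}\lesssim e^{-\mu\lambda t}$ for $t\ge1$, which is weaker than \eqref{eq:decay}.

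\textbf{Uniqueness.} The paper proves uniqueness only in the contraction ball of $X$. Your unconditional uniqueness in $C([0,\infty);L^3_\sigma)$ is an extra claim and would need a genuine adaptation to $\HH^3$.
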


The exponential factor {$e^{-\mu\lambda_\Def^{(3)}t}$} is the
qualitative novelty: on $\R^3$, the corresponding decay is algebraic
($t^{-1/4}$ alone). The curvature of $\HH^3$ converts the
large-time behaviour from power-law to exponential.

We also establish the following negative result, which delineates the
boundary of what geometry can achieve:

\begin{theorem}[UV obstruction]\label{thm:UV}
The Fujita-Kato contraction argument for the deformation Laplacian on
$\HH^3$ with $L^p$ initial data has a scaling integral $I(t)$
{satisfying for short times $t \ll 1$:}
\begin{equation}\label{eq:scaling-intro}
{I(t) \sim C_0\,t^{1/2-3/(2p)},}
\end{equation}
{where the scaling exponent is independent of the spectral gap}.
For $p < 3$, the
integral diverges as $t\to 0$, and the contraction argument does not
close, regardless of the spectral gap. For $p \leq 3/2$, the integral
strictly diverges for all $t > 0$. In particular, $L^2$ data
remains supercritical on $\HH^3$.
\end{theorem}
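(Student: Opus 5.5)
Write the scaling integral explicitly. In the Fujita--Kato scheme with $L^p$ data, one estimates the bilinear term $B(u,u)(t)=\int_0^t e^{-(t-s)\mu A}\PP\nabla\cdot(u\otimes u)\,ds$ in $L^p$ using $L^{p/2}\to L^p$ smoothing for $e^{-\tau\mu A}\PP\nabla\cdot$, together with the linear bound $\|e^{-s\mu A}u_0\|_{L^p}\lesssim\|u_0\|_{L^p}e^{-\mu\lambda s}$. The first estimate, which I take from the heat kernel bounds established earlier for the deformation Laplacian (Pierfelice-type), reads
\[
\|e^{-\tau\mu A}\PP\nabla\cdot F\|_{L^p}\le C\,\tau^{-1/2-3/(2p)}\,e^{-c\tau}\|F\|_{L^{p/2}} .
\]
Applying this with $F=u\otimes u$ and using only the $L^\infty_tL^p$ control available for $L^p$ data gives
\[
I(t)=\int_0^t (t-s)^{-1/2-3/(2p)}e^{-c(t-s)}\,ds .
\]
The sharp version of the statement should also track the auxiliary norm with weight $s^{\alpha}$, which gives a Beta-function factor. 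I would therefore define $I(t)$ as exactly this integral, matching the paper's usage, and read off its exponent.

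The first step is to show that the exponent $-1/2-3/(2p)$ is dictated by the short-time behaviour of the heat kernel, not by the spectral gap. For $\tau\le1$, the kernel of $e^{\tau\Delta_\Def}$ on $\HH^3$ and its gradient satisfy two-sided Gaussian bounds of Euclidean type: $|k_\tau(x,y)|\sim\tau^{-3/2}e^{-d^2/4\tau}$, with $\nabla k$ carrying an extra $\tau^{-1/2}$. This comes from the Davies--Mandouvalos explicit kernel, or from a parametrix, because the injectivity radius is infinite and the curvature is bounded. Young/Schur estimates then give $L^{p/2}\to L^p$ norm $\sim\tau^{-3/2(2/p-1/p)-1/2}=\tau^{-1/2-3/(2p)}$. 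The lower bound $\gtrsim$ comes from testing on bump functions concentrated at scale $\sqrt\tau$, which see only the Euclidean geometry of a normal coordinate ball. The spectral factor $e^{-\mu\lambda\tau}$ equals $1+O(\tau)$ there, so it cannot change the exponent.

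Substituting $\sigma=t-s$ gives, for $t\ll1$,
\[
I(t)=\int_0^t\sigma^{-1/2-3/(2p)}(1+O(\sigma))\,d\sigma .
\]
When $-1/2-3/(2p)>-1$, i.e.\ $p>3$, this is $\sim C_0t^{1/2-3/(2p)}$ with $C_0=(1/2-3/(2p))^{-1}$. At $p=3$ the exponent is $-1$, which is the borderline; in Kato's weighted-norm version it becomes a finite Beta integral with $t^0$ behaviour. For $p<3$, the weighted Kato norms $\sup_s s^{\alpha}\|u(s)\|_{L^q}$ that make the Beta integral converge force the bilinear estimate to carry the factor $t^{1/2-3/(2p)}$. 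This factor blows up as $t\to0$, so the contraction constant cannot be made small by shrinking $T$. Nor can it be made small by using the gap, since $e^{-\mu\lambda t}\to1$.

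For $p\le3/2$ the singular exponent satisfies $1/2+3/(2p)\ge3/2$. Here even the naive integral $\int_0^t\sigma^{-1/2-3/(2p)}d\sigma$ diverges at $\sigma=0$ for every $t>0$, and the exponential factor cannot repair a local non-integrability. The final claim follows by taking $p=2<3$: $L^2$ data remain supercritical.

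The main obstacle is the rigorous lower bound on the smoothing norm, i.e.\ showing that the estimate is sharp (the "$\sim$" rather than just "$\lesssim$") for the projected vector operator $\PP\nabla\cdot$ on $\HH^3$. The Leray projector is nonlocal, and the vector heat kernel for $\Delta_\Def$ is less explicit than the scalar one. I would first treat it by localisation: conjugate to normal coordinates on a ball of radius $\sqrt\tau$, compare with the Euclidean Oseen kernel up to $O(\tau)$ errors, and use the known Euclidean sharpness.
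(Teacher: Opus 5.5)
Your unweighted integral $I(t)=\int_0^t (t-s)^{-1/2-3/(2p)}e^{-c(t-s)}\,ds$ is the paper's integral in the special case $q=p$, $\beta=0$. It converges only when $1/2+3/(2p)<1$, i.e.\ $p>3$. For every $p\le 3$ it equals $+\infty$ for all $t>0$. With your definition, the asserted behaviour $I(t)\sim C_0t^{1/2-3/(2p)}$ therefore fails exactly in the regime $p<3$ that the theorem is about. The theorem's distinction is also lost: for $3/2<p<3$ the integral is finite for each $t>0$ but blows up as $t\to0$, while for $p\le 3/2$ it is infinite for every $t>0$.

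You then switch to Kato's weighted norms and say they ``force the factor $t^{1/2-3/(2p)}$''. That step is asserted, not shown, and it is the entire content of the paper's argument (Lemma~\ref{lem:integral}). Take the weight $t^\beta$ with $\beta=3/(2p)-3/(2q)$, and the $L^{q/2}\to L^q$ singularity $(t-s)^{-\delta}$ with $\delta=1/2+3/(2q)$. The substitution $s=\tau t$ gives
\[
I_1(t)=t^{1-\delta-\beta}\int_0^1(1-\tau)^{-\delta}\,\tau^{-2\beta}\,e^{-\alpha\tau t}\,d\tau,\qquad 1-\delta-\beta=\tfrac12-\tfrac{3}{2p}.
\]
The exponent does not depend on $q$, so no choice of weight helps. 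It does not depend on $\alpha$ either, since the gap enters only through $e^{-\alpha\tau t}\to1$. As $t\to0$ the prefactor tends to $\mathcal{B}(1-2\beta,1-\delta)$, which is finite if and only if $\delta<1$ and $2\beta<1$.

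Your $p\le 3/2$ argument has the same defect. The unweighted integral already diverges at $\sigma=0$ for all $p\le 3$, so that divergence cannot explain a threshold at $3/2$. Your inequality $1/2+3/(2p)\ge 3/2$ is the right one, but only when read as $\delta+\beta\ge 3/2$. Since $\delta+\beta=1/2+3/(2p)$ does not depend on $q$, for $p\le 3/2$ every choice of $q$ gives $\delta\ge 1$ or $2\beta\ge 1$. The Beta integral is then non-integrable at $\tau=1$ or at $\tau=0$ for every $t>0$, whatever weight is chosen. This is the paper's observation that $q>3$ and $3/p-3/q<1$ together force $p>3/2$.

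Finally, the lower bounds for $e^{-\tau\mu A}\PP\nabla\cdot$ that you flag as the main obstacle are not what the statement needs. The statement concerns the integral generated by the upper bounds in the contraction argument, and the paper proves nothing about sharpness of that smoothing estimate. The missing piece is the weighted Beta computation above, not a parametrix comparison in normal coordinates.
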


The exponent $1/2 - 3/(2p)$ is determined by the local (short-time)
scaling of the heat kernel, which is identical on $\HH^3$ and $\R^3$.
The spectral gap contributes only the exponential factor, which
controls the large-time behaviour but cannot cure the short-time
singularity. The boundary between what curvature helps and what it
cannot is located precisely at Kato's critical space $L^3$.

The paper is organised as follows. Section~\ref{sec:spectral}
establishes the spectral theory of the Stokes operator on $\HH^3$,
including the exact $L^2$ spectral gap and the $L^p$ extension via
the diamagnetic inequality. Section~\ref{sec:leray} proves the
commutation of the Leray projector with the deformation Laplacian on
$\HH^3$. Section~\ref{sec:bilinear} establishes the bilinear
(Oseen-Stokes) estimate via a duality argument.
Section~\ref{sec:contraction} carries out the Fujita-Kato contraction,
proving both the exponential stability (Theorem~\ref{thm:main}) and
the UV obstruction (Theorem~\ref{thm:UV}).
Section~\ref{sec:discussion} discusses the results in context.

\section{Spectral theory of the Stokes operator on $\HH^3$}
\label{sec:spectral}

\subsection{The deformation Laplacian and semigroup factorisation}

On $\HH^3$ with constant sectional curvature $-1$, the Ricci tensor
acts as $\Ric = -2g$ on vector fields. The deformation Laplacian is
therefore $\Delta_\Def = \Delta_B - 2$, where $\Delta_B$ is the
Bochner (rough) Laplacian. Since $\Ric = -2\,\mathrm{Id}$ commutes
with $\Delta_B$ (both are $SO(3)$-equivariant operators on the
isometry group of $\HH^3$), the deformation semigroup factorises:
\begin{equation}\label{eq:factorisation}
e^{t\mu\Delta_\Def} = e^{t\mu(\Delta_B - 2)} = e^{-2\mu t}\,
e^{t\mu\Delta_B}.
\end{equation}

\subsection{The $L^2$ spectral gap}

\begin{proposition}\label{prop:L2-gap}
The Stokes operator $A = -\PP\Delta_\Def$ on $L^2$ divergence-free
vector fields on $\HH^3$ has spectrum $\sigma(A) \subseteq [4,\infty)$.
\end{proposition}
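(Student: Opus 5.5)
The plan is to rewrite $-\Delta_\Def$ through the Weitzenböck identity, identifying vector fields with $1$-forms via $g$. Write $\Delta_H = d\delta + \delta d \ge 0$ for the (positive) Hodge Laplacian. On $1$-forms the Weitzenböck formula reads $\Delta_H = -\Delta_B + \Ric$. With $\Ric = -2g$ on $\HH^3$ this gives $-\Delta_B = \Delta_H + 2$, and therefore
\begin{equation*}
-\Delta_\Def = -\Delta_B + 2 = \Delta_H + 4 = d\delta + \delta d + 4 .
\end{equation*}
This is the whole mechanism. The Ricci shift enters twice with the same sign, once from the Weitzenböck formula and once from the definition of $\Delta_\Def$, and together they produce the constant $4$.

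Next I restrict to divergence-free fields. Take $u \in C_c^\infty$ with $\divg\, u = 0$, that is $\delta u = 0$. Then $-\Delta_\Def u = \delta d u + 4u$. This is again coclosed, because $\delta\delta = 0$, so the Leray projector acts as the identity on it. Hence on this core $Au = \delta d u + 4u$, and integration by parts, which is legitimate for compactly supported $u$, gives
\begin{equation*}
\langle Au, u\rangle_{L^2} = \|du\|_{L^2}^2 + 4\|u\|_{L^2}^2 \;\ge\; 4\|u\|_{L^2}^2 .
\end{equation*}
The same computation shows that $A$ is symmetric on $C^\infty_{c,\sigma}$, since $\langle Au,v\rangle = \langle du,dv\rangle + 4\langle u,v\rangle$.

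The remaining step, which I expect to be the main technical point, is to pass from this core to the actual self-adjoint operator $A$ on $L^2_\sigma$. I would define $A$ as the Friedrichs extension of the closed quadratic form $q(u) = \|du\|^2 + 4\|u\|^2$, with form domain the closure of $C^\infty_{c,\sigma}$ in the corresponding norm. The Friedrichs extension inherits the lower bound $q(u) \ge 4\|u\|^2$, and the min–max characterisation of the bottom of the spectrum then gives $\sigma(A) \subseteq [4,\infty)$.

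I also need this $A$ to be the operator that generates the semigroup $e^{-t\mu A}$ used in the rest of the paper. Since $\HH^3$ is complete, $\Delta_H$ is essentially self-adjoint on $C_c^\infty$ 1-forms (Chernoff/Strichartz). Moreover $\Delta_H$ commutes with $d$ and $\delta$, so its spectral projections preserve the coclosed subspace $L^2_\sigma$, and $\PP$ commutes with $\Delta_H$. Consequently $A$ coincides with the restriction of $\Delta_H + 4$ to $L^2_\sigma$, and the Friedrichs extension agrees with the closure of $-\PP\Delta_\Def$ on $C^\infty_{c,\sigma}$. This spectral-projection argument is the point that needs care. Given it, $\sigma(A) \subseteq 4 + \sigma(\Delta_H|_{L^2_\sigma}) \subseteq [4,\infty)$.
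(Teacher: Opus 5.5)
Your proposal is correct and follows the paper's argument: the Weitzenb\"ock identity gives $A=\Delta_H+4$ on coclosed fields, and nonnegativity of $\Delta_H$ gives the bound. Your direct identity $\langle Au,u\rangle=\|du\|^2+4\|u\|^2$ replaces the paper's detour through Hodge duality and Donnelly's theorem, which in the paper only contributes the trivial lower bound $0$ anyway. Your extra claim that the Friedrichs extension coincides with $(\Delta_H+4)|_{L^2_\sigma}$ is not established by the spectral-projection remark: it needs $C^\infty_{c,\sigma}$ to be a (form) core, and $\PP$ destroys compact support. That claim is not needed, though, since each of the two realizations satisfies $\sigma\subseteq[4,\infty)$ on its own.
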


\begin{proof}
The Weitzenb\"ock formula on $\HH^3$ reads $\Delta_H = -\Delta_B +
\Ric = -\Delta_B - 2$, where $\Delta_H = d\delta + \delta d \geq 0$
is the (positive-semidefinite) Hodge Laplacian. Rearranging:
$-\Delta_B = \Delta_H + 2$. Hence
\begin{equation}\label{eq:A-hodge}
A = -\Delta_\Def = -\Delta_B + 2 = \Delta_H + 4.
\end{equation}

For divergence-free vector fields $u$ (equivalently, co-closed
1-forms: $\delta u^\flat = 0$), the Hodge Laplacian reduces to
$\Delta_H u^\flat = d\delta u^\flat + \delta du^\flat = \delta du^\flat$.
By Hodge duality on $\HH^3$, the exterior derivative $d$ maps
co-exact 1-forms isometrically to exact 2-forms, so the spectrum of
$\Delta_H$ on co-exact 1-forms equals the spectrum on exact 2-forms.

By Donnelly's theorem~\cite{Don81} on the spectral geometry of
$\HH^n$, the $L^2$ spectrum of $\Delta_H$ on exact $k$-forms is
bounded below by $(n-2k+1)^2/4$. For $n = 3$ and $k = 2$:
$(3-4+1)^2/4 = 0$. Since $\Delta_H \geq 0$, we have
$\sigma(\Delta_H|_\mathrm{co\text{-}exact\;1\text{-}forms})
\subseteq [0,\infty)$, and by~\eqref{eq:A-hodge},
$\sigma(A) \subseteq [4,\infty)$.
\end{proof}

\begin{remark}
The Donnelly bound is sharp: the continuous spectrum of $\Delta_H$ on
exact 2-forms on $\HH^3$ starts at exactly $0$. So
$\lambda_\Def^{(2)} = 4$ is the exact infimum, not merely a lower
bound. This improves the crude estimate $\lambda_\Def \geq
\min(1,\kappa^2) = 1$ from the general coercivity
bound~\cite{WB2026}.
\end{remark}

\subsection{$L^p$--$L^q$ semigroup bounds}

The scalar heat kernel on $\HH^3$ is known
explicitly~\cite{DaviesBook}:
\begin{equation}\label{eq:scalar-heat}
p_t(r) = \frac{1}{(4\pi t)^{3/2}}\,\frac{r}{\sinh r}\,
e^{-t - r^2/(4t)},
\end{equation}
where $r$ is the geodesic distance. The factor $e^{-t}$ reflects the
bottom of the $L^2$ scalar spectrum at $\lambda_0^{(2)} = 1$. The
$L^p$ spectral bottom is $\lambda_0^{(p)} = 4(p-1)/p^2$
(see~\cite{DaviesBook}), giving the scalar semigroup bounds:
\begin{equation}\label{eq:scalar-Lp-Lq}
\|e^{t\Delta_\mathrm{scalar}}f\|_{L^q} \leq C\,
t^{-\frac{3}{2}(\frac{1}{p}-\frac{1}{q})}\,e^{-\lambda_0^{(p)}t}\,
\|f\|_{L^p}.
\end{equation}

The Hess-Schrader-Uhlenbrock diamagnetic inequality~\cite{HSU77}
bounds the Bochner heat semigroup on 1-forms pointwise by the scalar
semigroup:
$|e^{t\Delta_B}u|(x) \leq e^{t\Delta_\mathrm{scalar}}|u|(x)$.
Combined with the factorisation~\eqref{eq:factorisation} and the
commutation of $\PP$ with $\Delta_\Def$ (proved in the next section),
we obtain:

\begin{proposition}\label{prop:Lp-Lq}
The Stokes semigroup on $\HH^3$ satisfies, for $1 \leq p \leq q
\leq \infty$:
\begin{equation}\label{eq:stokes-Lp-Lq}
\|e^{-t\mu A}f\|_{L^q} \leq C\,t^{-\frac{3}{2}(\frac{1}{p}
-\frac{1}{q})}\,e^{-\mu\lambda_\Def^{(p)}t}\,\|f\|_{L^p},
\end{equation}
{where $\lambda_\Def^{(p)}$ is the effective $L^p$ spectral gap, bounded below via the diamagnetic inequality by $\lambda_0^{(p)} + 2 = 4(p-1)/p^2 + 2$.
In particular, for $p=3$ this yields $\lambda_\Def^{(3)} \geq 26/9$. For $p=2$, the formula gives a lower bound of $3$, but Proposition~\ref{prop:L2-gap} establishes the exact gap on divergence-free fields is sharper: $\lambda_\Def^{(2)} = 4$.}
\end{proposition}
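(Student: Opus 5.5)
The plan is to reduce the Stokes semigroup to the scalar heat semigroup in three steps. First, remove the Leray projector. Second, factor off the Ricci shift. Third, dominate the Bochner semigroup pointwise by the scalar one.

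\textbf{Removing the projector.} By the commutation of $\PP$ with $\Delta_\Def$ established in Section~\ref{sec:leray}, the operator $\PP$ also commutes with the semigroup $e^{t\mu\Delta_\Def}$. Hence for divergence-free $f$ (that is, $f=\PP f$) we have
\[
e^{-t\mu A}f = e^{t\mu\Delta_\Def}\PP f = \PP e^{t\mu\Delta_\Def} f = e^{t\mu\Delta_\Def}f,
\]
since $e^{t\mu\Delta_\Def}$ preserves divergence-free fields. So it suffices to bound $e^{t\mu\Delta_\Def}$ on all vector fields, and no $L^q$-boundedness of $\PP$ is needed. This matters at $q=1,\infty$, where $\PP$ is unbounded.

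\textbf{Shift and domination.} Next apply the factorisation~\eqref{eq:factorisation}, which gives $e^{t\mu\Delta_\Def}f=e^{-2\mu t}e^{t\mu\Delta_B}f$. The Hess--Schrader--Uhlenbrock inequality~\cite{HSU77} then gives the pointwise bound
\[
|e^{t\mu\Delta_B}f|\le e^{t\mu\Delta_{\mathrm{scalar}}}|f|.
\]
Taking $L^q$ norms and using~\eqref{eq:scalar-Lp-Lq} with time $\mu t$ yields
\[
\|e^{-t\mu A}f\|_{L^q}\le C(\mu t)^{-\frac32(\frac1p-\frac1q)}e^{-\mu(\lambda_0^{(p)}+2)t}\|f\|_{L^p}.
\]
The factor $\mu^{-\frac32(\frac1p-\frac1q)}$ is absorbed into $C$. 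This gives~\eqref{eq:stokes-Lp-Lq} with $\lambda_\Def^{(p)}\ge \lambda_0^{(p)}+2=4(p-1)/p^2+2$, and evaluating at $p=3$ gives $8/9+2=26/9$. For $p=2$ the formula gives $1+2=3$. Restricted to divergence-free fields, the sharper value $4$ follows from Proposition~\ref{prop:L2-gap}: since $A$ is self-adjoint with spectrum in $[4,\infty)$, we get $\|e^{-t\mu A}\|_{L^2\to L^2}\le e^{-4\mu t}$.

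\textbf{Main obstacle.} The key step is~\eqref{eq:scalar-Lp-Lq} itself: it requires the polynomial short-time factor and the exponential factor $e^{-\lambda_0^{(p)}t}$, with the rate indexed by the \emph{source} exponent $p$, to hold simultaneously for all $t>0$. I would prove it by splitting time. For $t\le1$, use the explicit kernel~\eqref{eq:scalar-heat}, which is dominated by the Euclidean Gaussian since $r/\sinh r\le1$, together with Young's inequality on the unimodular group; on this range the exponential factor is harmless. For $t\ge1$, write $e^{t\Delta}=e^{\Delta/2}e^{(t-1)\Delta}e^{\Delta/2}$. The first factor $e^{\Delta/2}$ maps $L^p\to L^p$, the middle factor decays like $e^{-\lambda_0^{(p)}(t-1)}$ on $L^p$ by Davies' $L^p$ spectral bound, and the last factor maps $L^p\to L^q$ for bounded time.

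If the $L^p$ operator norm only satisfies $\le C(1+t)^k e^{-\lambda_0^{(p)}t}$ rather than a clean exponential, the polynomial loss would have to be absorbed. I would flag this by reading $\lambda_\Def^{(p)}$ as an effective rate, or by citing the sharp kernel estimates in~\cite{DaviesBook} directly, which give the clean bound.
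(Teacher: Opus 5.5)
\textbf{The step you single out as key fails at large times, and it cannot be repaired as stated.} Your reduction is exactly the paper's argument: drop $\PP$ by commutation, pull out $e^{-2\mu t}$, apply Hess--Schrader--Uhlenbrock domination, then use the scalar bound \eqref{eq:scalar-Lp-Lq} at time $\mu t$. The paper simply cites \eqref{eq:scalar-Lp-Lq} from Davies without proof.

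For $t\ge 1$, your factorisation $e^{t\Delta}=e^{\Delta/2}e^{(t-1)\Delta}e^{\Delta/2}$ yields only $\|e^{t\Delta}\|_{L^p\to L^q}\le C e^{-\lambda_0^{(p)}t}$. The factor $t^{-\frac32(\frac1p-\frac1q)}$, which is less than $1$ for $t>1$, is never produced. Your fallback remark addresses a possible polynomial \emph{growth} $(1+t)^k$ of the $L^p$ norm, not this missing polynomial \emph{decay}.

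Nor can the factor be recovered with rate $\lambda_0^{(p)}$. For $2<p<\infty$ one has $\|e^{t\Delta}\|_{L^p\to L^\infty}=\|p_t\|_{L^{p'}}$, and
\[
\|p_t\|_{L^{p'}}^{p'}=4\pi(4\pi t)^{-3p'/2}e^{-p't}\int_0^\infty r^{p'}(\sinh r)^{2-p'}e^{-p'r^2/(4t)}\,dr .
\]
Laplace's method (peak at $r\approx 2(2-p')t/p'$, width $\sqrt t$) gives the integral $\asymp t^{p'+1/2}e^{(2-p')^2t/p'}$. Hence $\|p_t\|_{L^{p'}}\asymp t^{-1/(2p)}e^{-\lambda_0^{(p)}t}$, which is not $O(t^{-3/(2p)}e^{-\lambda_0^{(p)}t})$.

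The same failure occurs at the pair $(p,q)=(3,6)$ used in Theorem~\ref{thm:main}. Take $f=\frac{\sinh(r/3)}{\sinh r}\mathbf{1}_{\{|r-2t/3|<\sqrt t\}}$, radial about a point $o$. Then $\|f\|_{L^3}\asymp t^{1/6}$. The identity $r/3-t-r^2/(4t)=-\frac{8t}{9}-\frac{(r-2t/3)^2}{4t}$ gives $e^{t\Delta}f\gtrsim e^{-8t/9}$ on the unit ball about $o$. So $\|e^{t\Delta}\|_{L^3\to L^6}\gtrsim t^{-1/6}e^{-8t/9}\gg t^{-1/4}e^{-8t/9}$. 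More generally, the analogous test function built from $\sinh(sr)/\sinh r$ with $s=1-2/p$, centred at $r=2st$, shows that \eqref{eq:scalar-Lp-Lq} fails for large $t$ whenever $2<p<\infty$ and $q>3p/2$.

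The scalar operator norm is attained on nonnegative functions, so domination by $e^{\mu t\Delta}|f|$ cannot do better than it. This route, yours and the paper's alike, therefore proves only
\[
\|e^{-t\mu A}f\|_{L^q}\le C\max\bigl\{(\mu t)^{-\frac32(\frac1p-\frac1q)},1\bigr\}\,e^{-\mu(\lambda_0^{(p)}+2)t}\|f\|_{L^p},
\]
or equivalently the stated power factor with rate $\lambda_0^{(p)}+2-\epsilon$ for any $\epsilon>0$. Getting the stated form with rate exactly $26/9$ at $(3,6)$ would need information about the vector semigroup beyond pointwise domination, for instance its own $L^p$ spectral analysis on co-exact $1$-forms.

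Two smaller points. Your $p=2$ remark delivers the rate $4$ only for the $L^2\to L^2$ bound, not for $q>2$. The short-time half of your plan (Gaussian comparison plus Young's inequality on the unimodular group) is fine.
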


\section{The Leray projector on $\HH^3$}
\label{sec:leray}

\begin{proposition}\label{prop:leray}
On $\HH^3$, the Leray-Helmholtz projector $\PP$ satisfies:
\begin{enumerate}
\item[(a)] $\PP$ is bounded on $L^p(\HH^3)$ for all $1 < p < \infty$.
\item[(b)] $\PP$ commutes with $\Delta_\Def$:
$[\PP, \Delta_\Def] = 0$.
\end{enumerate}
\end{proposition}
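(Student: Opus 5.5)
Both parts will be proved through the Hodge description of $\PP$ on $1$-forms, identifying vector fields with $1$-forms via $g$. On $\HH^3$ the Hodge Laplacian on functions, $\Delta_H^{(0)} = \delta d$, has spectrum in $[1,\infty)$, so it is invertible. We can therefore write
\begin{equation*}
\PP = I - d\,(\Delta_H^{(0)})^{-1}\delta ,
\end{equation*}
and $\PP$ is the $L^2$-orthogonal projection onto co-closed forms. On $L^2$ there are no harmonic $1$-forms on $\HH^3$, so no third summand appears. The plan is to prove (b) first, by an algebraic argument that uses only this formula, and then (a), which is the analytic core.

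For (b), recall from \eqref{eq:A-hodge} that $-\Delta_\Def = \Delta_H + 4$ on $1$-forms; this identity holds for all $1$-forms, not only co-closed ones. So it suffices to show $[\PP,\Delta_H]=0$. The Hodge Laplacian commutes with $d$ and $\delta$:
\begin{equation*}
d\,\Delta_H^{(0)} = \Delta_H^{(1)}\,d, \qquad \delta\,\Delta_H^{(1)} = \Delta_H^{(0)}\,\delta .
\end{equation*}
Hence $\Delta_H^{(1)} d(\Delta_H^{(0)})^{-1}\delta = d(\Delta_H^{(0)})^{-1}\delta\,\Delta_H^{(1)}$ on smooth compactly supported forms. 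We extend this to the operator domain by density. Equivalently, $\PP$ is a spectral projection of the self-adjoint $\Delta_H$ restricted to the invariant splitting into exact and co-exact forms, so it commutes with every function of $\Delta_H$, in particular with the semigroup $e^{-t\mu A}$. This semigroup statement is what is actually used in Proposition~\ref{prop:Lp-Lq}.

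For (a), $L^2$ boundedness is immediate, since $\PP$ is an orthogonal projection. For $p\neq 2$, write $\PP = I - R R^*$ with $R = d(\Delta_H^{(0)})^{-1/2}$, the Riesz transform on functions, so that $R^* = (\Delta_H^{(0)})^{-1/2}\delta$. It then suffices to show that $R$ is bounded from $L^p$ functions to $L^p$ $1$-forms for $1<p<\infty$, and argue by duality for $R^*$. We will invoke the known $L^p$ boundedness of Riesz transforms on manifolds with Ricci curvature bounded below and a positive bottom of the spectrum (Lohoué; Strichartz; Bakry), which covers $\HH^n$. Alternatively, we can prove it directly: write
\begin{equation*}
(\Delta_H^{(0)})^{-1/2} = \pi^{-1/2}\int_0^\infty t^{-1/2} e^{-t\Delta_H^{(0)}}\,dt
\end{equation*}
and use the explicit kernel \eqref{eq:scalar-heat} to get a kernel of $\nabla(\Delta_H^{(0)})^{-1/2}$. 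Near the diagonal this kernel is a Calder\'on--Zygmund kernel, treated locally as on $\R^3$. Away from the diagonal it is controlled by $e^{-2r}$ times polynomial factors in $r$. The exponential-volume-growth issue is then handled by the spectral gap, via the factor $e^{-t}$ in \eqref{eq:scalar-heat}.

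The main obstacle is this off-diagonal part for $p$ far from $2$. On $\HH^3$ the volume of a sphere grows like $e^{2r}$, and a kernel decaying like $e^{-2r}$ lies only borderline in $L^1$, so Schur's test alone fails. The first approach is to integrate the kernel decay $e^{-2r}\cdot(\text{poly})$ against the Kunze--Stein phenomenon: convolution by radial kernels on $\HH^3$ is bounded on $L^p$ whenever the kernel is in $L^{p_0}$ for $p_0<\min(p,p')'$-type ranges. This gives boundedness of the off-diagonal radial convolution for every $1<p<\infty$. If that becomes delicate, we fall back on citing the Riesz transform theorem directly.
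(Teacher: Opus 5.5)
Your proposal is correct and follows essentially the same route as the paper. For (b) you use the commutation of $\Delta_H$ with $d$ and $\delta$ together with the Weitzenb\"ock identity $-\Delta_\Def = \Delta_H + 4$ on all $1$-forms, and for (a) you write $\PP = I - RR^*$ through the scalar Riesz transform and cite Strichartz/Lohou\'e; your Kunze--Stein kernel sketch is an optional extra the paper does not attempt. One small wording point: $\PP$ is not itself a spectral projection of $\Delta_H$, since the spectra on exact and co-exact $1$-forms, $[1,\infty)$ and $[0,\infty)$, overlap. It does, however, reduce $\Delta_H$, and that is exactly what you need for it to commute with all functions of $\Delta_H$.
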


\begin{proof}
(a) The projector is $\PP = I - d(-\Delta_\mathrm{scalar})^{-1}\delta$,
which factors through the Riesz transforms
$\mathcal{R} = d(-\Delta_\mathrm{scalar})^{-1/2}$. On complete
Riemannian manifolds with bounded geometry and strictly negative
curvature, the Riesz transforms are bounded singular integral
operators on $L^p$ for $1 < p < \infty$, by the Calder\'on-Zygmund
theory of Strichartz~\cite{Str83} and Lohou\'e~\cite{Loh85}.

(b) On any Riemannian manifold, the Hodge Laplacian $\Delta_H = d\delta
+ \delta d$ algebraically commutes with the exterior derivative $d$ and
codifferential $\delta$ (since $\Delta_H d = (d\delta + \delta d)d =
d\delta d = d(d\delta+\delta d) = d \Delta_H$), and therefore
universally commutes with $\PP$. Because $\HH^3$ is an Einstein manifold
with $\Ric = -2\,\mathrm{Id}$, the Weitzenb\"ock identity gives
$\Delta_\Def = -\Delta_H - 4$ globally on all vector fields. Since
$\Delta_\Def$ differs from $-\Delta_H$ only by a constant multiple of
the identity, it inherits this commutation: $[\PP,
\Delta_\Def] = [\PP, -\Delta_H - 4] = 0$.
\end{proof}

\begin{remark}
The commutation $[\PP, \Delta_\Def] = 0$ holds on any
Einstein manifold (where $\Ric = c\,g$ for a constant~$c$),
not only on space forms. On a manifold with non-constant
Ricci curvature, $[\PP, \Delta_\Def] \neq 0$ in general,
and the analysis requires commutator estimates
(see~\cite{WB2026negativelycurved}).
\end{remark}

As a consequence, the Stokes semigroup $e^{-t\mu A}$ restricted to
divergence-free fields equals the deformation semigroup
$e^{t\mu\Delta_\Def}$ applied to divergence-free fields, with no
commutator correction. The bounds of
Proposition~\ref{prop:Lp-Lq} hold for the Stokes semigroup without
modification.

\section{The bilinear estimate}
\label{sec:bilinear}

The nonlinear term in~\eqref{eq:mild} requires bounding the
composite operator
$T_\tau = e^{-\tau\mu A}\PP\nabla\cdot : L^r \to L^q$ for
$\tau > 0$. We establish this via duality.

\begin{proposition}\label{prop:bilinear}
For $1 < r \leq q < \infty$ and $\tau > 0$:
\begin{equation}\label{eq:oseen}
\|e^{-\tau\mu A}\PP\nabla\cdot F\|_{L^q} \leq C\,
(\tau^{-\delta} + \tau^{-\delta_0})\,
e^{-\mu\gamma\tau}\,\|F\|_{L^r},
\end{equation}
where $F$ is a symmetric tensor field, 
$\delta = \frac{1}{2}+\frac{3}{2}(\frac{1}{r}-\frac{1}{q})$,
$\delta_0 = \frac{3}{2}(\frac{1}{r}-\frac{1}{q})$,
and {$\gamma = 2 + \lambda_0^{(r')}/2 + \lambda_0^{(r)}/2$
is the effective bilinear spectral gap}.
\end{proposition}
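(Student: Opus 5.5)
We argue by duality, following the strategy announced at the start of Section~\ref{sec:bilinear}, and split the semigroup in time so that each half of the spectral gap comes from a different Lebesgue exponent. Write $e^{-\tau\mu A} = e^{-\frac{\tau}{2}\mu A}e^{-\frac{\tau}{2}\mu A}$. By Proposition~\ref{prop:leray}(b), $\PP$ commutes with the semigroup, so
\[
T_\tau F = e^{-\frac{\tau}{2}\mu A}\,\bigl(e^{-\frac{\tau}{2}\mu A}\PP\nabla\cdot F\bigr).
\]
The outer factor is handled by Proposition~\ref{prop:Lp-Lq}, applied from $L^r$ to $L^q$. It contributes the smoothing $\tau^{-\delta_0}$ (up to constants) together with the decay $e^{-\mu(\lambda_0^{(r)}+2)\tau/2}$, which is half of the $L^r$ gap $\lambda_0^{(r)}+2$. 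It therefore remains to bound the inner operator $S_s = e^{-s\mu A}\PP\nabla\cdot$ with $s=\tau/2$ on $L^r$, that is, from $L^r$ to itself.

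For $S_s$ we use duality. For a test field $\phi\in L^{r'}$ we have
\[
\langle S_sF,\phi\rangle = -\langle F,\nabla\, e^{-s\mu A}\PP\phi\rangle,
\]
using self-adjointness of $A$ and $\PP$ on $L^2$ and integrating by parts. Here $\nabla$ is the adjoint of the divergence on symmetric tensors, and only its symmetric part $\Def$ is paired against $F$. The needed ingredient is a gradient estimate for the deformation semigroup on $L^{r'}$:
\[
\|\nabla e^{-s\mu A}\psi\|_{L^{r'}} \le C\,(s^{-1/2}+1)\,e^{-\mu(\lambda_0^{(r')}+2)s}\,\|\psi\|_{L^{r'}}.
\]
This estimate, combined with the $L^{r'}$ boundedness of $\PP$ from Proposition~\ref{prop:leray}(a), gives
\[
\|S_s\|_{L^r\to L^r}\le C(s^{-1/2}+1)\,e^{-\mu(\lambda_0^{(r')}+2)s}.
\]

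Combining the two factors produces the prefactor $(\tau^{-1/2-\delta_0}+\tau^{-\delta_0}) = \tau^{-\delta}+\tau^{-\delta_0}$. The total decay rate is
\[
\frac{\lambda_0^{(r)}+2}{2} + \frac{\lambda_0^{(r')}+2}{2} = 2 + \frac{\lambda_0^{(r)}}{2} + \frac{\lambda_0^{(r')}}{2} = \gamma,
\]
which is exactly the rate in~\eqref{eq:oseen}. The additive $\tau^{-\delta_0}$ term accounts for the large-$\tau$ regime, where the gradient bound saturates at a constant rather than continuing to improve like $s^{-1/2}$.

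The main obstacle is the gradient estimate itself, because the diamagnetic inequality controls $|e^{t\Delta_B}u|$ but not its covariant derivative. I would split it as $e^{-s\mu A} = e^{-\frac{s}{2}\mu A}e^{-\frac{s}{2}\mu A}$. The first factor supplies the exponential decay $e^{-\mu(\lambda_0^{(r')}+2)s/2}$ on $L^{r'}$ via Proposition~\ref{prop:Lp-Lq}. For the remaining factor I would bound $\nabla e^{-\frac{s}{2}\mu A}$ on $L^{r'}$ by $C\sigma^{-1/2}$ for $\sigma = s/2\le 1$, and by $C$ for $\sigma\ge1$. One route is the Riesz transform bound $\|\nabla\psi\|_{L^{r'}}\lesssim\|A^{1/2}\psi\|_{L^{r'}}$, which comes from the Strichartz--Lohou\'e theory already cited in Proposition~\ref{prop:leray} and uses $A = \Delta_H + 4$ with $\Delta_H\ge0$. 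This is combined with analyticity of the semigroup on $L^{r'}$, giving $\|A^{1/2}e^{-\sigma\mu A}\|\lesssim\sigma^{-1/2}$ for small $\sigma$ and a bounded constant for large $\sigma$. An alternative route is an explicit Bochner-type pointwise bound on the heat kernel derivative on $\HH^3$.

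A further difficulty is that this splitting as written yields only half of the $L^{r'}$ gap, namely $\lambda_0^{(r')}/2+1$. Recovering the stated $\gamma$ exactly requires keeping the full $L^{r'}$ decay in the gradient estimate, with no halving at that step. I would aim to achieve this by exploiting the fact that the polynomial loss is absorbed entirely by the $(s^{-1/2}+1)$ factor, so no spectral bound needs to be traded away.
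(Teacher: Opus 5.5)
Your decomposition differs from the paper's, and its bookkeeping is sound. The outer half $e^{-\frac{\tau}{2}\mu A}:L^r\to L^q$ contributes $\tau^{-\delta_0}e^{-\mu(\lambda_0^{(r)}+2)\tau/2}$, and the inner half is handled by an $L^r\to L^r$ duality bound. Together these do give $\tau^{-\delta}+\tau^{-\delta_0}$ and exactly $\gamma$, \emph{provided} the gradient estimate
\[
\|\nabla e^{-s\mu A}\psi\|_{L^{r'}}\le C\,(s^{-1/2}+1)\,e^{-\mu\lambda s}\,\|\psi\|_{L^{r'}},\qquad \lambda=\lambda_0^{(r')}+2,
\]
holds with the \emph{full} rate $\lambda$. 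That is exactly the step you leave open. Your concrete plan, halving $s$ once more, gives only rate $\lambda/2$. The assembled decay would then be $\mu\bigl[\tfrac{\lambda_0^{(r)}+2}{2}+\tfrac{\lambda_0^{(r')}+2}{4}\bigr]=\tfrac34\mu\gamma$, using $\lambda_0^{(r)}=\lambda_0^{(r')}$. Your remedy, that the polynomial loss is ``absorbed by the $(s^{-1/2}+1)$ factor'', restates the goal rather than proving it.

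The missing idea is to split off a \emph{fixed} time instead of a fraction of $s$:
\begin{itemize}
\item For $s\le1$, use the smoothing bound $\|\nabla e^{-s\mu A}\|_{L^{r'}\to L^{r'}}\le Cs^{-1/2}$ together with $e^{-\mu\lambda s}\ge e^{-\mu\lambda}$.
\item For $s\ge1$, write $\nabla e^{-s\mu A}=(\nabla e^{-\mu A})\,e^{-(s-1)\mu A}$. Proposition~\ref{prop:Lp-Lq} with $p=q=r'$ then gives $\|\nabla e^{-s\mu A}\|_{L^{r'}\to L^{r'}}\le \|\nabla e^{-\mu A}\|_{L^{r'}\to L^{r'}}\,Ce^{-\mu\lambda(s-1)}=C'e^{-\mu\lambda s}$.
\end{itemize}
The single-time smoothing costs only a constant, the whole exponential survives, and your argument then closes with exactly $\gamma$.

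For comparison, the paper proceeds as follows:
\begin{itemize}
\item It dualises the whole operator $T_\tau$.
\item It pulls the full Weitzenb\"ock shift $e^{-2\mu\tau}$ out via~\eqref{eq:factorisation} before any splitting.
\item It applies Bakry's bundle Riesz transform in~\eqref{eq:split}.
\item It then halves time inside $(-\Delta_B)^{1/2}e^{\tau\mu\Delta_B}$. The fractional factor gets $\lambda_0^{(r')}/2$ from the bound $\sup_{x\ge\lambda_0^{(r')}}x^{1/2}e^{-\mu\tau x/2}$, evaluated at the spectral edge. The other half gets $\lambda_0^{(r)}/2$ from $L^{q'}\to L^{r'}$ smoothing by duality.
\end{itemize}
So the paper needs the same thing you do, namely the full rate in the derivative-carrying factor over its half-interval, and obtains it from a spectral-multiplier supremum. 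Your version spreads the shift $2$ evenly over the two halves. With the fixed-time peel-off it uses only the $L^{r'}\to L^{r'}$ semigroup bound and short-time smoothing, which avoids relying on a spectral-theorem supremum in $L^{r'}$.

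One smaller correction: $\|\nabla\psi\|_{L^{r'}}\lesssim\|A^{1/2}\psi\|_{L^{r'}}$ on vector fields is a bundle Riesz transform. It is not the scalar one from Strichartz--Lohou\'e cited in Proposition~\ref{prop:leray}(a). The clean way to get your short-time smoothing is the paper's factorisation $\nabla e^{-\sigma\mu A}=e^{-2\mu\sigma}\bigl[\nabla(-\Delta_B)^{-1/2}\bigr]\bigl[(-\Delta_B)^{1/2}e^{\sigma\mu\Delta_B}\bigr]$. Combine it with Bakry's theorem and analyticity of the Bochner semigroup on $L^{r'}$.
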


\begin{proof}
By duality, $\|T_\tau F\|_{L^q} = \sup_{\|\Phi\|_{L^{q'}}=1}
|\langle T_\tau F, \Phi\rangle|$, where $q' = q/(q-1)$. The adjoint
is $T_\tau^* = -\nabla e^{-\tau\mu A}\PP$ (using self-adjointness of
$A$ and $\PP$, and that $\nabla\cdot$ and $-\nabla$ are formal
adjoints). We bound $\|T_\tau^*\Phi\|_{L^{r'}}$ with
$r' = r/(r-1)$.

Splitting the gradient via the fractional Bochner Laplacian:
\begin{equation}\label{eq:split}
\nabla e^{-\tau\mu A}\PP = e^{-2\mu\tau}\bigl[\nabla(-\Delta_B)^{-1/2}
\bigr]\circ\bigl[(-\Delta_B)^{1/2}e^{\tau\mu\Delta_B}\PP\bigr].
\end{equation}
The factorisation~\eqref{eq:factorisation} and the commutation of
$\PP$ (Proposition~\ref{prop:leray}) have been used.

The bundle Riesz transform $\nabla(-\Delta_B)^{-1/2}$ is bounded on
$L^{r'}(\HH^3)$ for $1 < r' < \infty$, by the result of
Bakry~\cite{Bak87} on manifolds with Ricci curvature bounded below.

For the fractional smoothing, we halve the time:
$(-\Delta_B)^{1/2}e^{\tau\mu\Delta_B} = [(-\Delta_B)^{1/2}
e^{(\tau/2)\mu\Delta_B}]\circ e^{(\tau/2)\mu\Delta_B}$.
Since the spectrum of $-\Delta_B$ on $L^{r'}$ is bounded strictly
below by $\lambda_0^{(r')} > 0$,
the diamagnetic inequality and the explicit scalar
heat kernel imply that the operator norm of
$(-\Delta_B)^{1/2}e^{(\tau/2)\mu\Delta_B}$ is
bounded by
$\sup_{x \geq \lambda_0^{(r')}} x^{1/2} e^{-\mu\tau x/2}$.
For small $\tau$, this supremum is $O(\tau^{-1/2})$, but for
large $\tau$, the maximum occurs at the spectral boundary
$x = \lambda_0^{(r')}$. This yields the uniform bound
$\|(-\Delta_B)^{1/2}e^{(\tau/2)\mu\Delta_B}\|_{L^{r'}\to L^{r'}}
\leq C(\tau^{-1/2} + 1)\,e^{-\mu\lambda_0^{(r')}\tau/2}$.
The second half provides the
$L^{q'}\to L^{r'}$ transition. On divergence-free
fields, $e^{(\tau/2)\mu\Delta_B}\PP
= e^{\mu\tau}\,e^{-(\tau/2)\mu A}\PP$
(absorbing the Weitzenb\"ock shift). By duality,
the norm of the heat semigroup from $L^{q'} \to L^{r'}$ is
identically equal to its norm from $L^r \to L^q$. Applying
Proposition~\ref{prop:Lp-Lq} on $L^r \to L^q$ gives:
$\|e^{(\tau/2)\mu\Delta_B}\PP\Phi\|_{L^{r'}} \leq C\,
\tau^{-\frac{3}{2}(\frac{1}{r}-\frac{1}{q})}\,
e^{-\mu\lambda_0^{(r)}\tau/2}\,\|\Phi\|_{L^{q'}}$.

Assembling (and noting $1/q' - 1/r' = 1/r - 1/q$):
\begin{equation}
\|T_\tau^*\Phi\|_{L^{r'}} \leq C\,e^{-2\mu\tau}\,(\tau^{-1/2} + 1)\,
e^{-\mu\lambda_0^{(r')}\tau/2}\,\tau^{-\frac{3}{2}(\frac{1}{r}-\frac{1}{q})}\,
e^{-\mu\lambda_0^{(r)}\tau/2}\,\|\Phi\|_{L^{q'}}.
\end{equation}
By duality, $\|T_\tau\|_{L^r\to L^q}$ has the same bound. Collecting
exponentials gives the effective spectral gap
{$\gamma = 2 + \lambda_0^{(r')}/2 + \lambda_0^{(r)}/2$. For our
application to the Navier-Stokes equations, $q = 6$ and $1/r = 2/q$,
which gives $r = 3$ and $r' = 3/2$. Thus
$\lambda_0^{(3/2)} = \frac{4(3/2-1)}{(3/2)^2} = 8/9$ and
$\lambda_0^{(3)} = \frac{4(3-1)}{3^2} = 8/9$. This yields
exactly $\gamma = 2 + 4/9 + 4/9 = 26/9 = \lambda_\Def^{(3)}$.}
\end{proof}

For the Navier-Stokes nonlinearity, the tensor $F = u\otimes v$
satisfies $\|u\otimes v\|_{L^r} \leq \|u\|_{L^q}\|v\|_{L^q}$ with
$1/r = 2/q$. Substituting into~\eqref{eq:oseen}:
\begin{equation}\label{eq:bilinear-applied}
\|e^{-\tau\mu A}\PP\nabla\cdot(u\otimes v)\|_{L^q} \leq C\,
(\tau^{-\delta} + \tau^{-\delta_0})\,e^{-\mu\gamma\tau}\,
\|u\|_{L^q}\|v\|_{L^q},
\end{equation}
where {$\gamma = 26/9$} collects the exponential factors. 
The exponent $\delta = 1/2 + 3/(2q)$ and the large-time correction
$\delta_0 = 3/(2q)$ capture the temporal singularities of the bilinear term.

\section{The Fujita-Kato contraction and the main theorems}
\label{sec:contraction}

\subsection{The function space}

We seek a mild solution of~\eqref{eq:mild} in the space
\begin{equation}
X = \bigl\{u \in C((0,\infty); L^q_\sigma) : \|u\|_X < \infty\bigr\},
\end{equation}
with norm
\begin{equation}\label{eq:X-norm}
\|u\|_X = \sup_{t>0}\,e^{\alpha t}\,t^\beta\,\|u(t)\|_{L^q},
\end{equation}
where $q > 3$, $\beta = 3/(2p) - 3/(2q)$ (matching the linear
semigroup decay for $L^p$ data), and {$\alpha = \mu\gamma = \mu\lambda_\Def^{(3)}$ (the full effective spectral gap)}. We take $p = 3$ and $q = 6$, giving
$\beta = 1/4$.

\subsection{Linear and bilinear bounds}

The linear term satisfies, by Proposition~\ref{prop:Lp-Lq}:
\begin{equation}
e^{\alpha t}t^\beta\|e^{-t\mu A}u_0\|_{L^6} \leq C\,
e^{(\alpha - \mu\lambda_\Def^{(3)})t}\,\|u_0\|_{L^3}
{= C\,\|u_0\|_{L^3}},
\end{equation}
since {$\alpha = \mu\lambda_\Def^{(3)}$} by construction. So
$\|e^{-t\mu A}u_0\|_X \leq C_1\|u_0\|_{L^3}$.

For the bilinear term $B(u,v)(t) = -\int_0^t T_{t-s}(u(s)\otimes
v(s))\,ds$, using~\eqref{eq:bilinear-applied}:
\begin{align}
e^{\alpha t}t^\beta\|B(u,v)(t)\|_{L^6} &\leq 
C\|u\|_X\|v\|_X\,
e^{\alpha t}t^\beta\int_0^t \bigl((t-s)^{-\delta}
+ (t-s)^{-\delta_0}\bigr)\,e^{-\mu\gamma(t-s)}\,
e^{-2\alpha s}\,s^{-2\beta}\,ds\nonumber\\
&\equiv C\|u\|_X\|v\|_X\cdot (I_1(t) + I_2(t)).\label{eq:bilinear-X}
\end{align}

\subsection{The scaling integral}

\begin{lemma}\label{lem:integral}
With $\alpha = \mu\gamma$, the integral splits into two pieces $I_1(t)$
and $I_2(t)$. The primary integral satisfies
\begin{equation}\label{eq:I1-exact}
I_1(t) \leq \mathcal{B}(1-2\beta, 1-\delta)\,t^{1/2-3/(2p)},
\end{equation}
where $\mathcal{B}$ is the Beta function. The secondary integral
$I_2(t)$ has the large-time asymptotic scaling
$I_2(t) \sim O(t^{\beta-\delta_0})$. In particular:
\begin{itemize}
\item For $p = 3$: choosing $q = 6$ yields $\beta = 1/4$,
$\delta = 3/4$, and $\delta_0 = 1/4$. Thus
$I_1(t) \leq \mathcal{B}(1/2, 1/4) < \infty$. For $I_2(t)$, the
large-time behavior scales as $t^{\beta-\delta_0} = t^0 = O(1)$.
Both integrals are uniformly bounded, preserving the exact exponential
decay rate.
\item For $p \leq 3/2$: convergence of the Beta integral requires
$2\beta < 1 \implies 3/p - 3/q < 1$ and $\delta < 1 \implies q > 3$,
which combined require $p > 3/2$. Thus, for $p \le 3/2$, no valid
choice of $q$ exists, the integral has a non-integrable
singularity at $\tau=0$, and it strictly diverges to $+\infty$
for all $t > 0$.
\item For $3/2 < p < 3$ (e.g., $L^2$ data with $p=2$): Kato's
framework allows freely choosing $q > \max(3,p)$. Choosing
$q = 4$ yields $\beta = 3/8$ and $\delta = 7/8$, so the Beta
integral $\mathcal{B}(1/4, 1/8)$ converges, giving a finite
integral for $t>0$. However, it scales as $t^{-1/4}$ and blows up
as $t \to 0$, keeping the data supercritical.
\end{itemize}
\end{lemma}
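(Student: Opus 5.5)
We work directly from the definitions of $I_1$ and $I_2$ in~\eqref{eq:bilinear-X}. The first step is to remove the exponentials. With $\alpha=\mu\gamma$, the product of the weights is
\[
e^{\alpha t}e^{-\mu\gamma(t-s)}e^{-2\alpha s}=e^{-\alpha s}\le 1 ,
\]
so each integrand is dominated by its purely algebraic part. This leaves
\[
I_1(t)\le t^\beta\int_0^t (t-s)^{-\delta}s^{-2\beta}\,ds,\qquad
I_2(t)\le t^\beta\int_0^t (t-s)^{-\delta_0}s^{-2\beta}\,ds .
\]
This is the only place the spectral gap enters. Because the gap affects no power of $t$, the exponent we obtain is geometry-independent, which is the content of Theorem~\ref{thm:UV}.

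For $I_1$ we substitute $s=t\sigma$ and obtain
\[
I_1(t)\le t^{\beta+1-\delta-2\beta}\,\mathcal{B}(1-2\beta,1-\delta).
\]
The Beta function is finite exactly when $2\beta<1$ and $\delta<1$. We then compute the exponent: $1-\delta-\beta = 1-\tfrac12-\tfrac{3}{2}\bigl(\tfrac1r-\tfrac1q\bigr)-\tfrac{3}{2p}+\tfrac{3}{2q}$. With $1/r=2/q$ the $q$-terms cancel, giving $1/2-3/(2p)$, which is~\eqref{eq:I1-exact}.

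For $I_2$ the same substitution gives $t^{\beta+1-\delta_0-2\beta}\,\mathcal{B}(1-2\beta,1-\delta_0)$. Since $\delta_0=\delta-1/2$, this is $t^{1-3/(2p)}$, and it is finite whenever $I_1$ is. To match the stated large-time scaling $t^{\beta-\delta_0}$, we split the integral at $s=t/2$. On $[t/2,t]$ the factor $s^{-2\beta}\sim t^{-2\beta}$ is essentially constant. On $[0,t/2]$ we would keep the weight $e^{-\alpha s}$ rather than discarding it: this makes $\int_0^{t/2}s^{-2\beta}e^{-\alpha s}\,ds$ bounded, and $(t-s)^{-\delta_0}\sim t^{-\delta_0}$ there. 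Together these give $I_2=O(t^{\beta-\delta_0})$ for large $t$ and the Beta bound for small $t$.

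The cases then follow by substitution. For $p=3$, $q=6$ we have $\beta=1/4$, $\delta=3/4$, $\delta_0=1/4$. Then $I_1\le\mathcal{B}(1/2,1/4)$, $I_2$ is bounded for small $t$ by $t^{1/2}\mathcal{B}$ and for large $t$ by $t^0$, and $\sup_t(I_1+I_2)<\infty$.

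For $p\le3/2$ we show that no admissible $q$ exists. We need $2\beta<1$, which means $3/p<1+3/q$, and $\delta<1$, which means $q>3$. These force $3/p<2$, i.e.\ $p>3/2$. When $\delta\ge1$, the integrand behaves like $(t-s)^{-\delta}t^{-2\beta}$ near $s=t$. When $2\beta\ge1$, it behaves like $s^{-2\beta}$ near $s=0$. In either case the integral equals $+\infty$ for every $t>0$, because the lower bound $e^{-\alpha s}\ge e^{-\alpha t}$ keeps the singularity intact.

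For $p=2$, $q=4$ we get $\beta=3/8$ and $\delta=7/8$, giving the finite value $\mathcal{B}(1/4,1/8)\,t^{-1/4}$.

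I expect the main obstacle to be the divergence claims. They require lower bounds, not merely upper bounds. The integrals as defined are only majorants of $\|B(u,v)\|$, so "strict divergence" has to be interpreted as divergence of the scaling integral itself, and I would state it that way. A smaller point is the large-time bound for $I_2$: the cancellation of exponentials must be handled so that the weight $e^{-\alpha s}$ is not discarded in the large-time regime.
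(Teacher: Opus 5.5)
Your argument is essentially the paper's. The weights collapse to $e^{-\alpha s}\le 1$. The substitution $s=t\tau$ then gives $\mathcal{B}(1-2\beta,1-\delta)\,t^{1-\delta-\beta}$, with $1-\delta-\beta=1/2-3/(2p)$ once $1/r=2/q$. The three parameter regimes are checked the same way as in the paper.

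The one difference is the large-time bound for $I_2$. The paper uses a Laplace-type localisation heuristic: $e^{-\alpha\tau t}$ concentrates the integral at $\tau\sim 1/(\alpha t)$. You instead split at $s=t/2$, which yields an actual upper bound. You also state the small-time bound $I_2\le\mathcal{B}(1-2\beta,1-\delta_0)\,t^{1-3/(2p)}$, which the paper's ``uniformly bounded'' claim relies on without saying so.

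As written, though, the split has a slip. You keep $e^{-\alpha s}$ only on $[0,t/2]$. On $[t/2,t]$, the bound $s^{-2\beta}\lesssim t^{-2\beta}$ alone gives a contribution of order $t^{1-\beta-\delta_0}$. For $p=3$, $q=6$ that is $t^{1/2}$, which is not $O(t^{\beta-\delta_0})=O(1)$ and would break $\sup_t I_2<\infty$. You must also use $e^{-\alpha s}\le e^{-\alpha t/2}$ on that half, making the piece $O\bigl(t^{1-\beta-\delta_0}e^{-\alpha t/2}\bigr)$. Your closing remark shows you know the weight must be kept, but the split itself should do it.

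Using $e^{-\alpha s}\ge e^{-\alpha t}$ to make the divergence claims honest lower bounds is a real improvement; the paper only asserts divergence. Apply the same bound in the $p=2$, $q=4$ case to get $I_1(t)\ge e^{-\alpha t}\,\mathcal{B}(1/4,1/8)\,t^{-1/4}$. That lower bound is what justifies ``blows up as $t\to 0$''; you only recorded the upper value.
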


\begin{proof}
With $\alpha = \mu\gamma$, the exponential factors combine as
$e^{\alpha t}e^{-\mu\gamma(t-s)}e^{-2\alpha s} = e^{-\alpha s} \leq 1$.
For the principal part $I_1(t)$, the substitution $s = \tau t$ gives
\begin{align}
I_1(t) &= t^\beta\int_0^1 (t(1-\tau))^{-\delta}\,e^{-\alpha \tau t}\,
(t\tau)^{-2\beta}\,t\,d\tau\nonumber\\
&\leq t^{1-\delta-\beta}\int_0^1 (1-\tau)^{-\delta}\,\tau^{-2\beta}\,d\tau
= \mathcal{B}(1-2\beta, 1-\delta) t^{1/2-3/(2p)}.
\end{align}
This converges when $\beta < 1/2$ and $\delta < 1$.
For the secondary part $I_2(t)$, the same substitution yields
\begin{equation}
I_2(t) = t^{1-\delta_0-\beta}\int_0^1 (1-\tau)^{-\delta_0}\,
\tau^{-2\beta}\,e^{-\alpha \tau t}\,d\tau.
\end{equation}
For large times $t \gg 1$, the exponential localizes the integral
to $\tau \sim 1/(\alpha t)$, contributing an internal asymptotic
scaling of $t^{2\beta-1}$. Thus, the overall large-time scaling
of $I_2(t)$ is $t^{1-\delta_0-\beta} \times t^{2\beta-1}
= t^{\beta-\delta_0}$. By choosing $p=3$ and $q=6$, we obtain
$\beta-\delta_0 = 1/4 - 1/4 = 0$, explicitly demonstrating the
exact mathematical cancellation that bounds $I_2(t)$ uniformly by a constant.
\end{proof}

The $q$-dependence cancels exactly: the scaling exponent depends only
on the integrability of the initial data. This is a geometric
invariant of the problem, reflecting the fact that the short-time
behaviour of the heat kernel is locally Euclidean.

\subsection{Proof of Theorem~\ref{thm:main}}

With $p = 3$, $q = 6$, $\beta = 1/4$, and $\delta = 3/4$, the
bilinear bound~\eqref{eq:bilinear-X} gives
\begin{equation}
\|B(u,v)\|_X \leq C_2\,\|u\|_X\|v\|_X,
\end{equation}
where 
$C_2 = C\sup_{t>0}(I_1(t) + I_2(t)) < \infty$. The map
$u \mapsto e^{-t\mu A}u_0 + B(u,u)$ is a contraction on the ball
$\{u\in X : \|u\|_X \leq 2C_1\|u_0\|_{L^3}\}$ provided
$4C_1C_2\|u_0\|_{L^3} < 1$, i.e., $\|u_0\|_{L^3} < \epsilon_0
\equiv (4C_1C_2)^{-1}$.

The unique fixed point satisfies $\|u\|_X \leq 2C_1\|u_0\|_{L^3}$,
which unpacks to
\begin{equation}
\|u(t)\|_{L^6} \leq 2C_1\|u_0\|_{L^3}\,t^{-1/4}\,
{e^{-\mu\gamma t}},
\end{equation}
giving~\eqref{eq:decay} with {the full exponential decay rate}
$\gamma = \lambda_\Def^{(3)} = 26/9$. \qed

\subsection{Proof of Theorem~\ref{thm:UV}}

For general $L^p$ data with $p < 3$, the bounds fail. Specifically, for
$p \leq 3/2$, Lemma~\ref{lem:integral} shows that $I_1(t)$ possesses a
non-integrable temporal singularity and diverges strictly to $+\infty$
for all $t>0$. For $3/2 < p < 3$, by freely choosing an appropriate Kato
exponent $q > \max(3,p)$, the Beta function converges for any $t>0$, but
the evaluated integral behaves as $I_1(t) \sim t^{1/2-3/(2p)}$ as $t\to
0$. The exponent $1/2 - 3/(2p) < 0$ gives
$\sup_{t>0}(I_1(t) + I_2(t)) = \infty$, and the
contraction argument does not close. No choice of $q$, $\beta$, or
$\alpha$ can remedy this, because the scaling exponent is independent of
$q$ and the spectral gap contributes only {a bounded} exponential
factor, which is $O(1)$ as $t\to 0$. \qed

\section{Discussion}
\label{sec:discussion}

\subsection{Comparison with flat space}

On $\R^3$, the Kato theorem~\cite{Kato84} gives global existence for
small $L^3$ data with algebraic decay
$\|u(t)\|_{L^q} \leq C\,t^{-\frac{3}{2}(\frac{1}{3}-\frac{1}{q})}$.
On $\HH^3$, the decay acquires an exponential factor:
{$\|u(t)\|_{L^q} \leq C\,t^{-\frac{3}{2}(\frac{1}{3}-\frac{1}{q})}\,
e^{-\mu\lambda_\Def^{(3)}t}$}. The short-time behaviour ($t \ll 1$)
is identical; the improvement is entirely at large times ($t \gg 1$).

The smallness threshold $\epsilon_0 = (4C_1C_2)^{-1}$ is the same as
on $\R^3$ (since the Beta-function constant $\mathcal{B}(1/2,1/4)$ is
geometry-independent). Curvature does not enlarge the basin of
attraction of the zero solution; it only accelerates the decay within
that basin.

While the present work establishes global exponential stability
explicitly on the constant-curvature space $\mathbb{H}^3$, the physical
mechanism driving this dissipation is not an artifact of constant
curvature. In a companion paper \cite{WB2026negativelycurved}, we
establish that this exponential decay, along with the exact ultraviolet
scaling obstruction, extends universally to the broader class of
complete, simply connected 3-manifolds with pinched negative sectional
curvature. On such manifolds, the deformation Laplacian's spectral gap
continues to dictate the macroscopic stabilization of the fluid, further
cementing its role as the analytically optimal viscous operator.

\subsection{The UV/IR boundary}

The results locate a sharp boundary between what global geometry
(curvature, spectral gap) can and cannot do for the 3D Navier-Stokes
equations:

\emph{What curvature provides (IR improvement):} 
exponential time decay for $L^p$ norms with
$p \geq 3$ (by interpolation with the semigroup
bounds of Proposition~\ref{prop:Lp-Lq}), with rate
determined by the spectral gap $\lambda_\Def^{(p)}$.
This is a large-scale, low-frequency effect.

\emph{What curvature cannot provide (UV obstruction):} improved
regularity for data rougher than $L^3$. The obstruction is a
short-time, high-frequency phenomenon determined by the local
Euclidean structure of the manifold. Because every Riemannian
manifold is locally flat, no amount of global curvature can change
the ultraviolet scaling.

\subsection{The role of the deformation Laplacian}

The results are specific to the deformation Laplacian. With the Hodge
Laplacian, the Stokes operator on $\HH^3$ would be $A_H = \Delta_H$,
with spectral gap $0$ on divergence-free fields (by
Proposition~\ref{prop:L2-gap}, the gap is $\Delta_H \geq 0$ with
infimum exactly $0$). The exponential decay would be lost, and the
large-time behaviour would revert to the flat-space algebraic rate.
The Bochner Laplacian would give $A_B = -\Delta_B = \Delta_H + 2$,
with gap $2$. Only the deformation Laplacian gives the full gap of
$4$ and the corresponding decay rate $\lambda_\Def^{(3)} = 26/9$.

This illustrates a point made in~\cite{WB2026}: the choice of viscous
operator has analytical consequences beyond the formal structure of
the equations. The deformation Laplacian is not only the kinematically
correct choice; it is the analytically optimal one on negatively
curved manifolds.


\begin{thebibliography}{99}

\bibitem{Czubak24}
M.~Czubak, In search of the viscosity operator on Riemannian manifolds,
Notices Amer.\ Math.\ Soc.\ \textbf{71} (2024) 8--16.

\bibitem{WB2026}
Z.-W.~Wang and S.L.~Braunstein, Resolving the viscosity operator
ambiguity on Riemannian manifolds via a kinematic selection principle,
arXiv preprint arXiv:2605.17502 (2026).

\bibitem{WB2026Thin}
Z.-W.~Wang and S.L.~Braunstein, Universal thin-shell limits for the
viscous operator on Riemannian hypersurfaces,
arXiv preprint arXiv:2605.20589 (2026).

\bibitem{Kato84}
T.~Kato, Strong $L^p$-solutions of the Navier-Stokes equation in
$\R^m$, with applications to weak solutions, Math.\ Z.\ \textbf{187}
(1984) 471--480.

\bibitem{CC10}
C.H.~Chan and M.~Czubak, Non-uniqueness of the Leray-Hopf
solutions in the hyperbolic setting, Dyn.\ Partial Differ.\ Equ.\
\textbf{10} (2013) 43--77.

\bibitem{CC13b}
M.\ Czubak and C.\ H.\ Chan,
Remarks on the weak formulation of the Navier–Stokes equations
on the 2D hyperbolic space.
Ann.\ Inst.\ H.\ Poincar\'e Anal.\ Non Lin\'eaire \textbf{33} (2016) 655–698.

\bibitem{KM12}
B.~Khesin and G.~Misio{\l}ek, Euler and Navier-Stokes
equations on the hyperbolic plane, Proc.\ Natl.\ Acad.\
Sci.\ USA \textbf{109} (2012) 18324--18326.

\bibitem{Lichtenfelz15}
L.A.~Lichtenfelz, Nonuniqueness of solutions of the
Navier-Stokes equations on Riemannian manifolds,
Ann.\ Global Anal.\ Geom.\ \textbf{50} (2016) 237--248.

\bibitem{Pierfelice17}
V.~Pierfelice, The incompressible Navier-Stokes equations on
non-compact manifolds,
J.\ Geom.\ Anal.\ \textbf{27} (2017) 577--617.

\bibitem{Balentine20}
B.~Balentine, Well-posedness and global in time behavior for
$L^p$-mild solutions to the Navier-Stokes equation on the hyperbolic space, 
arXiv preprint arXiv:2008.01850 (2020)
(PhD thesis, University of Colorado Boulder, 2020).

\bibitem{Don81}
H.~Donnelly, The differential form spectrum of hyperbolic space,
Manuscripta Math.\ \textbf{33} (1981) 365--385.

\bibitem{DaviesBook}
E.B.~Davies, \emph{Heat Kernels and Spectral Theory}, Cambridge
Univ.\ Press, 1989.

\bibitem{HSU77}
H.~Hess, R.~Schrader, D.A.~Uhlenbrock, Domination of semigroups
and generalization of Kato's inequality, Duke Math.\ J.\ \textbf{44}
(1977) 893--904.

\bibitem{Str83}
R.S.~Strichartz, Analysis of the Laplacian on the complete Riemannian
manifold, J.\ Funct.\ Anal.\ \textbf{52} (1983) 48--79.

\bibitem{Loh85}
N.~Lohou\'e, Comparaison des champs de vecteurs et des puissances du
laplacien sur une vari\'et\'e riemannienne \`a courbure non positive,
J.\ Funct.\ Anal.\ \textbf{61} (1985) 164--201.

\bibitem{WB2026negativelycurved}
Z.-W.~Wang and S.L.~Braunstein, Exponential stability for the
three-dimensional Navier-Stokes equations on negatively curved manifolds,
arXiv preprint arXiv:2606.04407 (2026).

\bibitem{Bak87}
D.~Bakry, \'Etude des transformations de Riesz dans les
vari\'et\'es riemanniennes \`a courbure de Ricci minor\'ee, in:
\emph{S\'eminaire de Probabilit\'es XXI}, Lecture Notes in Math.\
\textbf{1247}, Springer, 1987, pp.~137--172.

\end{thebibliography}
\end{document}